
\documentclass[letterpaper, 10 pt, conference]{ieeeconf}  

\IEEEoverridecommandlockouts                              
\overrideIEEEmargins

\usepackage{epsfig} 
\usepackage{amssymb}  
\usepackage{upgreek}
\usepackage{array}
\usepackage{caption}
\usepackage[fleqn]{mathtools}
\usepackage{enumerate}
\usepackage{caption}
\usepackage{upgreek}
\usepackage{color}
 
 \allowdisplaybreaks
 
\mathtoolsset{showonlyrefs}

\usepackage{graphicx}
\graphicspath{{./figures/}}
\usepackage{epsfig}
\usepackage{subfig}
\usepackage{tikz}
\usetikzlibrary{decorations.pathmorphing}
\usetikzlibrary{arrows,automata,positioning,shapes}
\usetikzlibrary{decorations.pathmorphing}
\tikzset{snake it/.style={decorate, decoration=snake}}

\newtheorem{assumption}{Assumption}
\newtheorem{lemma}{Lemma}
\newtheorem{theorem}{Theorem}

\usepackage{filecontents}

\begin{filecontents}{waterresourcebib.bib}
	@article{hsu2005survey,
		title={A survey of constructing Lyapunov functions for mathematical models in population biology},
		author={Hsu, Sze-Bi},
		journal={Taiwanese Journal of Mathematics},
		pages={151--173},
		year={2005},
		publisher={JSTOR}
	}
	@Book{mesbahi2010graph,
		Title                    = {Graph Theoretic Methods in Multiagent Networks},
		Author                   = {Mesbahi, Mehran and Egerstedt, Magnus},
		Publisher                = {Princeton University Press},
		Year                     = {2010},
	}
	@Article{abelson1964mathematical,
		Title = {Mathematical Models of the Distribution of Attitudes Under Controversy},
		Author                   = {Abelson, Robert P},
		Journal                  = {Contributions to Mathematical Psychology},
		Year                     = {1964}
	}
	@book{horn1990matrix,
		title={Matrix analysis},
		author={Horn, Roger A and Johnson, Charles R},
		year={1990},
		publisher={Cambridge University Press}
	}
	@article{albert2002statistical,
		title={Statistical mechanics of complex networks},
		author={Albert, R{\'e}ka and Barab{\'a}si, Albert-L{\'a}szl{\'o}},
		journal={Reviews of Modern Physics},
		volume={74},
		number={1},
		pages={47},
		year={2002},
		publisher={APS}
	}
	@article{watts1998collective,
		title={Collective dynamics of âsmall-worldâ networks},
		author={Watts, Duncan J and Strogatz, Steven H},
		journal={Nature},
		volume={393},
		number={6684},
		pages={440},
		year={1998},
		publisher={Nature Publishing Group}
	}
	@article{cassandras2016smart,
		title={Smart cities as cyber-physical social systems},
		author={Cassandras, Christos G},
		journal={Engineering},
		volume={2},
		number={2},
		pages={156--158},
		year={2016},
		publisher={Elsevier}
	}
	@book{SolowRobertM2000Gt:a,
		publisher = "Oxford University Press",
		year = "2000",
		title = "Growth theory : an exposition",
		edition = "2nd ed..",
		author = "Solow, Robert M",
	}
	@book{newman2010networks,
		title={Networks: An Introduction},
		author={Newman, Mark},
		year={2010},
		publisher={Oxford University Press}
	}
	@article{manzoor2018learning,
		title={Learning Through Fictitious Play in a Game-Theoretic Model of Natural Resource Consumption},
		author={Manzoor, Talha and Rovenskaya, Elena and Davydov, Alexey and Muhammad, Abubakr},
		journal={IEEE Control Systems Letters},
		volume={2},
		number={1},
		pages={163--168},
		year={2018},
		publisher={IEEE}
	}
	@book{SiljakDragoslavD1978Lds:,
		publisher = "North-Holland",
		year = "1978",
		title = "Large-scale dynamic systems : stability and structure",
		author = "Siljak, Dragoslav D",
	}
	@article{videras2013social,
		title={Social networks and the environment},
		author={Videras, Julio},
		journal={Annu. Rev. Resour. Econ.},
		volume={5},
		number={1},
		pages={211--226},
		year={2013},
		publisher={Annual Reviews}
	}
	@book{prell2011social,
		title={Social Networks and Natural Resource Management: Uncovering the social fabric of environmental governance},
		author={Prell, Christina and Bodin, {\"O}rjan},
		year={2011},
		publisher={Cambridge University Press}
	}
	@book{ostrom2015governing,
		title={Governing the commons},
		author={Ostrom, Elinor},
		year={2015},
		publisher={Cambridge university press}
	}
	@article{russo2010global,
		title={Global convergence of quorum-sensing networks},
		author={Russo, Giovanni and Slotine, Jean Jacques E},
		journal={Physical Review E},
		volume={82},
		number={4},
		pages={041919},
		year={2010},
		publisher={APS}
	}
	@article{mirtabatabaei2014eulerian,
		title={Eulerian opinion dynamics with bounded confidence and exogenous inputs},
		author={Mirtabatabaei, Anahita and Jia, Peng and Bullo, Francesco},
		journal={SIAM Journal on Applied Dynamical Systems},
		volume={13},
		number={1},
		pages={425--446},
		year={2014},
		publisher={SIAM}
	}
	@article{hsu1978global,
		title={On global stability of a predator-prey system},
		author={Hsu, Sze-Bi},
		journal={Mathematical Biosciences},
		volume={39},
		number={1-2},
		pages={1--10},
		year={1978},
		publisher={Elsevier}
	}
	@article{freedman1985global,
		title={Global stability and persistence of simple food chains},
		author={Freedman, HI and So, JW-H},
		journal={Mathematical biosciences},
		volume={76},
		number={1},
		pages={69--86},
		year={1985},
		publisher={Elsevier}
	}
	@article{chiu1999lyapunov,
		title={Lyapunov functions for the global stability of competing predators},
		author={Chiu, Chuang-Hsiung},
		journal={Journal of mathematical analysis and applications},
		volume={230},
		number={1},
		pages={232--241},
		year={1999},
		publisher={Elsevier}
	}
	@article{olfati2007consensus,
		title={Consensus and cooperation in networked multi-agent systems},
		author={Olfati-Saber, Reza and Fax, J Alex and Murray, Richard M},
		journal={Proceedings of the IEEE},
		volume={95},
		number={1},
		pages={215--233},
		year={2007},
		publisher={IEEE}
	}
	@incollection{gordon1954economic,
		title={The economic theory of a common-property resource: the fishery},
		author={Gordon, H Scott},
		booktitle={Classic Papers in Natural Resource Economics},
		pages={178--203},
		year={1954},
		publisher={Springer}
	}
	@article {Liu_1513,
		author = {Liu , Jianguo and Dietz, Thomas and Carpenter, Stephen R. and Alberti, Marina and Folke, Carl and Moran, Emilio and Pell, Alice N. and Deadman, Peter and Kratz, Timothy and Lubchenco, Jane and Ostrom, Elinor and Ouyang, Zhiyun and Provencher, William and Redman, Charles L. and Schneider, Stephen H. and Taylor, William W.},
		title = {Complexity of Coupled Human and Natural Systems},
		year = {2007},
		journal = {Science}
	}
	@book{berkes2008navigating,
		title={Navigating social-ecological systems: building resilience for complexity and change},
		author={Berkes, Fikret and Colding, Johan and Folke, Carl},
		year={2008},
		publisher={Cambridge University Press}
	}
	@book{friedkin2006structural,
		title={A Structural Theory of Social Influence},
		author={Friedkin, Noah E},
		volume={13},
		year={2006},
		publisher={Cambridge University Press}
	}
	@book{khalil2002nonlinear,
		title={Nonlinear Systems},
		author={Khalil, Hassan},
		year={2002},
		publisher={Prentice Hall}
	}
	@article{manzoor2017structural,
		title={Structural Effects and Aggregation in a Social-network Model of Natural Resource Consumption},
		author={Manzoor, Talha and Rovenskaya, Elena and Muhammad, Abubakr},
		journal={Proc. of the 20th IFAC World Congress},
		year={2017},
	}
	@article{manzoor2016game,
		title={Game-theoretic insights into the role of environmentalism and social-ecological relevance: A cognitive model of resource consumption},
		author={Manzoor, Talha and Rovenskaya, Elena and Muhammad, Abubakr},
		journal={Ecological Modelling},
		volume={340},
		pages={74--85},
		year={2016},
		publisher={Elsevier}
	}
	@book{perman2003natural,
		title={Natural resource and environmental economics},
		author={Perman, Roger},
		year={2003},
		publisher={Pearson Education}
	}
	@article{festinger1954theory,
		title={A theory of social comparison processes},
		author={Festinger, Leon},
		journal={Human relations},
		volume={7},
		number={2},
		pages={117--140},
		year={1954}
	}
	@article{mosler2003integrating,
		title={Integrating commons dilemma findings in a general dynamic model of cooperative behavior in resource crises},
		author={Mosler, Hans-Joachim and Brucks, Wernher M},
		journal={European Journal of Social Psychology},
		volume={33},
		number={1},
		pages={119--133},
		year={2003},
		publisher={Wiley Online Library}
	}
	@Book{aberman2013policy,
		Title                    = {The policy landscape of agricultural water management in Pakistan},
		Author                   = {Aberman, Noora-Lisa and Aberman, Noora-Lisa and Wielgosz, Benjamin and Zaidi, Fatima and Ringler, Claudia and Akram, Agha Ali and Bell, Andrew and Issermann, Maikel},
		Publisher                = {Intl Food Policy Res Inst},
		Year                     = {2013},
		Volume                   = {1265},
	}
	
	@Article{chaudhry2017improving,
		Title                    = {Improving on-farm water use efficiency: Role of collective action in irrigation management},
		Author                   = {Chaudhry, Anita M},
		Journal                  = {Water Resources and Economics},
		Year                     = {2017},
		Publisher                = {Elsevier},
	}
	@Article{friedkin1990social,
		Title                    = {Social influence and opinions},
		Author                   = {Friedkin, Noah E and Johnsen, Eugene C},
		Journal                  = {Journal of Mathematical Sociology},
		Year                     = {1990},
		Number                   = {3-4},
		Pages                    = {193--206},
		Volume                   = {15},
		Publisher                = {Taylor \& Francis},
	}
	@Article{roopnarine2013ecology,
		Title                    = {Ecology and the tragedy of the commons},
		Author                   = {Roopnarine, Peter},
		Journal                  = {Sustainability},
		Year                     = {2013},
		Number                   = {2},
		Pages                    = {749--773},
		Volume                   = {5},
		Publisher                = {Multidisciplinary Digital Publishing Institute},
	}
	@article{friedkin2015problem,
		title={The problem of social control and coordination of complex systems in sociology: A look at the community cleavage problem},
		author={Friedkin, Noah E},
		journal={IEEE Control Systems},
		volume={35},
		number={3},
		pages={40--51},
		year={2015},
		publisher={IEEE}
	}
	@Article{taylor1968towards,
		Title                    = {Towards a mathematical theory of influence and attitude change},
		Author                   = {Taylor, Michael},
		Journal                  = {Human Relations},
		Year                     = {1968},
		Number                   = {2},
		Pages                    = {121--139},
		Volume                   = {21},
		
		Owner                    = {Ruf},
		Publisher                = {Sage Publications Sage CA: Thousand Oaks, CA},
		Timestamp                = {2017.08.22}
	}
\end{filecontents}

\newcommand{\Rmax}{\mathrm{R}_{\text{max}}}
\newcommand{\n}{\mathrm{n}}
\renewcommand{\r}{\mathrm{r}}
\renewcommand{\a}{\mathrm{a}}
\newcommand{\s}{\mathrm{s}}
\newcommand{\R}{\mathrm{R}}
\newcommand{\Ri}{\mathrm{R}_i}
\newcommand{\w}{\upomega}
\renewcommand{\b}{\mathrm{b}}
\newcommand{\K}{\mathrm{K}}
\renewcommand{\u}{\mathrm{u}}
\newcommand{\z}{\mathrm{z}}
\usepackage{cite}
\title{\LARGE \bf
Stability of Leaderless Resource Consumption Networks}
\author{Sebastian F. Ruf$^{1}$, Matthew T. Hale$^{2}$, Talha Manzoor$^{3}$, Abubakr Muhammad$^{4}$
\thanks{$^{1}$ S.F. Ruf is with the Department of Electrical and Computer Engineering, Georgia Institute of Technology, Atlanta, GA. Email:{\tt\small ruf@gatech.edu}}%
\thanks{$^{2}$ M.T. Hale is with the Department of Mechanical and Aerospace Engineering,
University of Florida, Gainesville, FL. Email: {\tt\small matthewhale@ufl.edu}}%
\thanks{$^{3}$ T. Manzoor is with the Department of Electrical Engineering, Namal College, Mianwali, Pakistan. Email: {\tt\small talha@namal.edu.pk}}%
\thanks{$^{4}$ A. Muhammad is with the Department of Electrical Engineering, Lahore University of Management Sciences (LUMS), Lahore, Pakistan. Email: {\tt\small abubakr@lums.edu.pk}}%
}%

\begin{document}

\maketitle
\thispagestyle{empty}
\pagestyle{empty}

\begin{abstract}
In this paper, we study the global stability properties of a multi-agent model of natural resource consumption that balances ecological and social network components in determining the consumption behavior of a group of agents. The social network is assumed to be leaderless, a condition that ensures that no single node has a greater influence than any other node on the dynamics of the resource consumption. It is shown that any network structure can be made leaderless by the social preferences of the agents. 
The ecological network component includes a quantification of each agent's environmental concern, which
captures each individual agent's threshold for when a resource becomes scarce. We show that leaderlessness
and a mild bound on agents' environmental concern are jointly sufficient for global
asymptotic stability of the consumption network to a positive consumption value, indicating
that appropriately configured networks can continuously consume a resource without
driving its value to zero. 
The behavior of these leaderless resource consumption networks is verified in simulation. 
\end{abstract}

\section{Introduction}
In the face of an ever-changing natural climate, understanding the behavior of renewable natural resources and the impact of human consumption on those resources is important for ensuring long term resource consumption \cite{berkes2008navigating,Liu_1513}. Modeling of natural resources allows the prediction of consumption behavior and offers valuable insights into the relationship between various system components. In this paper we study network structure and resource consumption. Of particular interest is the equilibrium behavior of these models, as equilibria can help describe the long term sustainability of natural resources \cite{SolowRobertM2000Gt:a}. The discussion of long term system behavior must be preceded by an understanding of the stability properties of the system. 

This paper focuses on the study of an agent-based model of natural resource consumption previously introduced and studied in \cite{manzoor2016game, manzoor2017structural, manzoor2018learning}. This model captures insights from the social sciences on the consumption behavior of humans in a form that can be analyzed mathematically.  Past work on this model has sought to understand the behavior of the model and has considered stability of this model in the two agent case. This paper extends the consideration of stability to consider $\n$ agents interacting over a network. 

The overall system consists of an ecological sub-model, which describes the dynamics of the resource, and a social sub-model, describing the dynamics of the agents' consumption. The ecological sub-model is based on the Gordon Schaefer model, which represents a class of well studied systems associated with dynamic processes in population biology, ecological economics and other related disciplines \cite{gordon1954economic,perman2003natural}. The stability of the Gordon-Schaefer model, as well as similar logistic growth models, has been studied extensively in isolation from the social processes which drive human consumption behavior \cite{hsu1978global,freedman1985global,chiu1999lyapunov,hsu2005survey}.  

The social component of the model describes the cognitive decision making process of the agents regarding change in their consumption. This process is influenced both by the state of the resource and the consumption of other neighboring agents. The influence of the agents on each others' consumption is similar to how agents influence each other in mathematical models of opinion formation \cite{abelson1964mathematical} 
and consensus in cooperative multi-agent systems \cite{mesbahi2010graph}. The dependence of the agents' resource consumption on the state of the resource appears as an exogenous factor or time-varying bias in the overall dynamics (see \cite{mirtabatabaei2014eulerian, russo2010global} for similar models). An important component of the social process is the underlying social network structure, which greatly influences the ability of a community to successfully manage its natural resources \cite{ostrom2015governing,videras2013social}. 

This paper studies the behavior of a consumption network under the assumption that it is leaderless: that one agent will not drive the social network component of the model more than any other agent in the network. This assumption allows an aggregation of individual state nodes \cite{manzoor2017structural}, facilitating an understanding of the system-level behavior. Discussing a consuming population in aggregate is a common tool for the study of resource consumer social networks\cite{prell2011social} and allows for the design of actions taken to change behavior, which often happen at the community level\cite{videras2013social}. The leaderless assumption, as will be shown, also captures a wide array of systems and a rich class of stable system behaviors. 

The rest of the paper is organized as follows.
Section~\ref{sec:model} introduces the consumption model and discusses its properties.  
Section \ref{sec:leaderless} discusses the leaderless condition and presents a Lyapunov based proof for global stability of the system. In Section \ref{sec:sim} the behavior of leaderless systems is studied in simulation, with a discussion in Section \ref{sec:disc}. The paper concludes in Section~\ref{sec:conc}.

\section{System Dynamics}
\label{sec:model}
This section presents the dynamics governing the resource quantity and consumer behavior in the coupled socio-ecological system. We first discuss each sub-model and then give an
aggregate leaderless consumption model. 

\subsection{The Ecological Sub-model}
The ecological component of the system is assumed to consist of a single renewable resource with quantity at time $\tau$ represented by $R(\tau)$. In the absence of consumption, the resource grows at intrinsic growth rate $\r$ and saturates at carrying capacity $\Rmax$. The resource is connected to a consuming population consisting of $\n$ individuals. Each individual can harvest the resource by exerting consumption effort $e_i(\tau)$, where $i \in \{1,\dots, \n\}$ represents a single consumer. The resource dynamics are assumed to follow the standard Gordon-Schaefer model \cite{perman2003natural} with catch coefficient equal to one, which is given as
\begin{align}
\label{eq:res}
	\frac{d R(\tau)}{d \tau} = \r R(\tau) \left( 1 - \frac{R(\tau)}{\Rmax} \right) - R(\tau) \sum_{i=1}^{\n} e_i(\tau).
\end{align}

\subsection{The Social Sub-model}
The social sub-model is based on Festinger's theory of social comparison processes \cite{festinger1954theory}, which postulates that human beings evaluate their decisions, opinions and abilities by reflecting on both objective and social information. In the context of natural resource consumption, objective information corresponds to the state of the resource and social information corresponds to the consumption of other socially connected individuals \cite{mosler2003integrating}. To balance between objective and social information, the change in consumption effort of the agent is given as a weighted sum of both ecological and social factors. 

The ecological factor for consumer $i$ is given by $\displaystyle R(\tau)-\Ri$, where $\Ri \in \mathbb{R}$ represents the perceived scarcity threshold of $i$, below which agent $i$ considers the resource to be scarce, and above which she considers it to be abundant. The ecological factor is weighed by $\a_i \in (0,\infty)$, which represents the set of factors to which agent $i$ attributes the state of the natural resource. An ecological attribution $\a_i \rightarrow 0$ represents a consumer that attributes the state of the resource entirely to the actions of the consuming society (including the agent itself), while increasing values of $\a_i$ correspond to the individual attributing the current state of the resource to natural causes (droughts, wildfires, heavy rain, etc).

The ecological factor is balanced by a social component, given by $\displaystyle \sum_{i=1}^{\n} \w_{ij} (e_j(\tau)-e_i(\tau))$, which is the difference between $i$'s consumption and that of the other socially connected consumers in the population. The graph connectivity is captured by $\w_{ij}\geq0$ which is the strength of the social tie directed from $j$ to $i$. We assume that $\displaystyle \sum_{j=1}^{\n} \w_{ij} = 1$ and $\w_{ii} = 0 \, \forall \, i\in \{1,\dots, \n \}$. The social factor is weighed by $\s_i \in (0,\infty)$, the social-value orientation of $i$. Social-value orientations $\s_i \rightarrow 0$ represent extremely non-cooperative individuals, which will ignore the actions of their network neighbors. Conversely,  increasing values of $\s_i$ correspond to increasingly cooperative individuals. 

Combining the social and ecological component gives the dynamics of the consumption effort for consumer $i$ as
{\setlength{\mathindent}{0pt}
\begin{align}
\label{eq:eff}
	\frac{d {e}_i(\tau)}{d \tau} = \a_i (R(\tau)-\Ri) + \s_i \sum_{j=1}^{\n} \w_{ij} (e_j(\tau)-e_i(\tau)),
\end{align}}
where the ecological and social factors have been weighed in accordance with findings in social psychological research on consumer behavior \cite{manzoor2016game}. In particular, individuals that attribute blame to natural causes tend to give more importance to ecological information and vice versa.  Similarly, cooperative individuals are more concerned with maximizing equality in consumption than non-cooperative ones, and as such will be further influenced by the social factor (see \cite{manzoor2016game} and included references for more details).

\subsection{Non-dimensionalized Socio-Ecological System}
In order to reduce the dimensionality of the parameter space, the system given by Eq. \eqref{eq:res} and \eqref{eq:eff} is non-dimensionalized. The process of non-dimensionalization has an added benefit of allowing comparison between system parameters. The dynamics of the non-dimensionalized state of the resource $\displaystyle x=\frac{R(\tau)}{\Rmax}$ and the non-dimensionalized consumption $\displaystyle y_{i}=\frac{e_{i}}{\r}$ are given as follows
{\setlength{\mathindent}{0pt}
\begin{align}
\begin{split}
\label{eq:ses}
	\dot{x} &= (1-x)x - x\sum_{i=1}^\n y_i,\\
	\dot{y}_i &= \b_i \left( (1-\upnu_i) (x-\uprho_i) + \upnu_i \sum_{j=1}^{\n} \w_{ij} (y_j - y_i) \right),
\end{split}
\end{align}}
where $i \in \{1,\dots, \n\}$, $$\displaystyle \b_i = \frac{\a_i \Rmax + \r \s_i}{\r^2}, ~~~\displaystyle \upnu_i = \frac{\r \s_i}{\a_i \Rmax + \r \s_i},$$ and the derivatives $\dot{x}$ and $\dot{y}_i$ are taken with respect to the non-dimensional time $t = \r \tau$. The non-dimensionalized threshold $\displaystyle \uprho_i=\frac{\Ri}{\Rmax}$ is called the environmentalism of $i$. The parameter $b_i$ is the sensitivity of $i$, which represents $i's$ openness to change in her consumption. The final parameter, $\upnu_i$, is called the socio-ecological relevance of $i$ and represents the importance that $i$ gives to social information relative to ecological information in the process of changing consumption behavior.

\subsection{Influence and Leadership}
The consumption of $i$ is influenced by the consumption of all other agents that are socially connected to her. This notion of connectivity is captured in Eq. \eqref{eq:ses} via the parameters $\upomega_{ij}$, which denote the strength of the social tie directed from $j$ to $i$. If $\w_{ij}=0$ this implies that there is no social link from $j$ to $i$, allowing the collection of $\w_{ij}$'s to specify the topology of the underlying social network. The aggregate influence of the rest of the agents on $i$'s consumption is given by $\sum_{j=1}^{\n} \b_i \upnu_i \w_{ij}$ and is called the in-influence of $i$. The aggregate influence that $i$ exerts on the other agents in the network is given by $\sum_{j=1}^{\n} \b_j \upnu_j \w_{ji}$ and is called the out-influence of $i$. The difference between the out-influence and the in-influence is called the net-influence of $i$ and determines the role of $i$ in the network as a leader (positive net-influence), a follower (negative net-influence) or neutral (zero net-influence). In this paper, we consider cases in which all agents in the network are neutral.

\section{Global Asymptotic Stability of Leaderless Networks}
\label{sec:leaderless}
In this section, two assumptions on the network and parameters are introduced before transforming the non-dimensionalized dynamics in
Eq.~\eqref{eq:ses} into a form more amenable to stability analysis. 
Then the equilibrium point of the transformed dynamics is computed and a coordinate
shift is applied to move this equilibrium point to the origin. Finally global
asymptotic stability of the origin is proven, which implies stability
of the non-dimensionalized dynamics in Equation~\eqref{eq:ses}. 

\subsection{Transformed Leaderless Aggregate Dynamics}
In this section, two assumptions on the model parameters are introduced.  

\begin{assumption} \label{as:leaderless}
The network under consideration is leaderless, i.e.,
\begin{equation}
\sum_{j=1}^{\n}\big(\w_{ij}\b_i\upnu_i - \w_{ji}\b_j\upnu_j\big) = 0
\end{equation}
for all $i \in \{1, \ldots, \n\}$. \hfill $\triangle$
\end{assumption}

Note that this assumption is the same used in defining homogeneous consumer networks in \cite{manzoor2017structural}. 
Below, the network-level dynamics will be derived by considering the new state variables
\begin{equation} 
	\qquad\qquad z = \log x \quad \textnormal{and } \quad	u = \sum_{i=1}^{\n} y_i.
\end{equation}

In doing so the following assumption, which bounds the maximum possible value of $\uprho_i$ for each agent, will be enforced.

\begin{assumption} 
\label{as:rhoi}
	For all $i \in \{1,\dots,\n\}$, $\uprho_i \in (0, 2)$. \hfill $\triangle$
\end{assumption}

Because $\uprho_i$ is the normalized value of $\Ri$, Assumption~\ref{as:rhoi} implies that $\Ri \in (0, 2\Rmax)$. This is therefore
a rather weak 
assumption as few agents are expected to have $\Ri >\Rmax$
because this implies that agent $i$'s 
scarcity threshold
is \emph{larger} than the resource carrying capacity $\Rmax$. 

\subsection{Leaderless Networks}
Before deriving the modified network level dynamics of the system, this section considers Assumption~\ref{as:leaderless} in more detail. First, we consider the existence of a leaderless dynamic.  
\begin{lemma}\label{lem:lead_ex}
For any set of network weights $\upomega_{ij}$, there exists a set of social orientations that renders the network leaderless. 
\end{lemma}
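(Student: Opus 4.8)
The plan is to show that the leaderless condition is a linear system of equations in the unknowns, and to exhibit an explicit solution. Recall from Assumption~\ref{as:leaderless} that the network is leaderless precisely when, for every $i$,
\begin{equation}
\sum_{j=1}^{\n}\big(\w_{ij}\b_i\upnu_i - \w_{ji}\b_j\upnu_j\big) = 0.
\end{equation}
Writing $c_i = \b_i\upnu_i$ and using $\sum_{j}\w_{ij}=1$, the $i$th condition becomes $c_i = \sum_{j=1}^{\n}\w_{ji}\,c_j$. Stacking the $c_i$ into a vector $c = (c_1,\dots,c_\n)^\top$, this is exactly the eigenvector equation $c = \mathbf{W}^\top c$, where $\mathbf{W}$ is the matrix with entries $\w_{ij}$. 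Thus the first step is to recognize that leaderlessness is equivalent to $c$ being a left eigenvector of $\mathbf{W}$ (equivalently a right eigenvector of $\mathbf{W}^\top$) associated with the eigenvalue $1$.

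Next I would establish that such a $c$ always exists and can be taken with strictly positive entries. Since $\sum_{j}\w_{ij}=1$ with $\w_{ij}\geq 0$, the matrix $\mathbf{W}$ is row-stochastic, so $\mathbf{1} = (1,\dots,1)^\top$ is a right eigenvector of $\mathbf{W}$ with eigenvalue $1$; hence $1$ is an eigenvalue of $\mathbf{W}^\top$ as well, guaranteeing a nonzero solution $c$ to $c = \mathbf{W}^\top c$. To secure positivity I would invoke Perron--Frobenius theory: the eigenvalue $1$ is the spectral radius of the stochastic matrix $\mathbf{W}$, and a left Perron eigenvector can be chosen componentwise nonnegative; in the irreducible (strongly connected) case it is strictly positive. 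For the general case, one can always perturb or argue directly that a strictly positive $c$ solving $c=\mathbf{W}^\top c$ exists for a row-stochastic $\mathbf{W}$ (for instance, the stationary distribution of the associated Markov chain).

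The final step is to recover admissible social orientations from $c$. Given any strictly positive $c_i$, I would solve $\b_i\upnu_i = c_i$ subject to the definitions $\b_i = (\a_i\Rmax + \r\s_i)/\r^2$ and $\upnu_i = \r\s_i/(\a_i\Rmax + \r\s_i)$, which together give $\b_i\upnu_i = \s_i/\r$. Hence the required choice is simply $\s_i = \r\, c_i > 0$, with $\a_i>0$ free; every resulting $\s_i$ lies in $(0,\infty)$ as required by the model. This exhibits an explicit set of social orientations rendering the network leaderless for arbitrary weights $\w_{ij}$.

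The main obstacle I anticipate is the positivity requirement in the non-irreducible case: guaranteeing that $c$ can be chosen with strictly positive entries (so that $\s_i\in(0,\infty)$) rather than merely nonnegative. For a strongly connected network Perron--Frobenius delivers this immediately, but for general weights one must argue more carefully, either by decomposing into communicating classes and patching together strictly positive stationary vectors, or by restricting attention to networks for which $\mathbf{W}$ admits a positive left eigenvector; I expect the cleanest route is to note that the Markov chain with transition matrix $\mathbf{W}$ possesses a stationary distribution, and to handle reducibility by working class by class.
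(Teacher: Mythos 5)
Your argument is, at its core, the same as the paper's. The paper assembles the matrix $W$ with off-diagonal entries $W_{ij}=\w_{ji}$ and diagonal entries $W_{ii}=-\sum_{j\neq i}\w_{ij}=-1$; since this is exactly $\mathbf{W}^{\top}-I$ for your row-stochastic weight matrix $\mathbf{W}$, the paper's null-space condition is your eigenvector equation $c=\mathbf{W}^{\top}c$, and the paper's existence argument (the columns of $W$ sum to zero, so $0$ is an eigenvalue of $W$) is your observation that row-stochasticity puts $1$ in the spectrum of $\mathbf{W}^{\top}$. You add two things the paper omits: the explicit inversion of the parametrization, $\b_i\upnu_i=\s_i/\r$, so that $\s_i=\r c_i$ recovers admissible orientations with $\a_i>0$ free; and the observation that $c$ must be \emph{strictly positive} for the orientations to satisfy $\s_i\in(0,\infty)$. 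The paper's proof stops at ``there is a nontrivial null vector'' and never addresses either point.

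The positivity obstacle you flag at the end is genuine, and it is a gap in the paper's own proof rather than a defect you introduced; moreover, it cannot be patched in full generality, because the lemma as stated fails for weight matrices with transient nodes. Take $\n=3$ with $\w_{12}=1$, $\w_{21}=1$, $\w_{31}=\w_{32}=\tfrac{1}{2}$, and all other weights zero: nobody listens to node $3$, so its out-influence is zero, and the leaderless condition for $i=3$ reads $\b_3\upnu_3=\w_{13}\b_1\upnu_1+\w_{23}\b_2\upnu_2=0$, forcing $\s_3=0\notin(0,\infty)$. In general the eigenspace of $\mathbf{W}^{\top}$ at eigenvalue $1$ is spanned by the stationary distributions of the closed communicating classes, so every solution vanishes on transient nodes; a strictly positive solution exists if and only if there are no transient nodes (in particular, under strong connectivity, where your Perron--Frobenius argument completes the proof). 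Consequently your suggested ``patching over communicating classes'' works only when every class is closed, and your proposal is complete on exactly the class of networks for which the lemma is actually true, whereas the paper's proof silently produces a null vector that, for examples like the one above, no admissible social orientations can realize.
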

\begin{proof}
Consider the matrix of edge weights,\\
\resizebox{\linewidth}{!} {%
$W=\begin{bmatrix}
-\left(\sum_{j=2}^{n}\w_{1j}\right)& \w_{21}& \dots &\w_{n1} \\
\w_{12}&-\left(\sum_{\substack{j=1\\j\neq2}}^{n}\w_{2j}\right)& \dots&\w_{n2}\\
\vdots&\vdots&\ddots&\vdots\\
\w_{1n}&\w_{2n}&\dots&-\left(\sum_{j=1}^{n-1}\w_{nj}\right)
\end{bmatrix}.$ \\
}
\hspace{.1cm}
The leaderless condition is equivalent to $W$ having a non-trivial null space. Notice that the matrix $W^{T}$ has rows that sum to $0$, which implies that the vector $1_{n}$ is an eigenvector with eigenvalue $0$. As $W$ and $W^{T}$ have the same eigenvalues \cite{horn1990matrix}, $W$ also has an eigenvalue at $0$ and therefore a non-trivial null space. Then there must be a vector of normalized social orientation $bv\in\mathrm{null}(W)$ and such a vector renders the network leaderless. 
\end{proof}
Figure~\ref{fig:led_ex} shows two examples of a leaderless network for a set of uniform weights on a line graph and a cycle graph. Lemma~\ref{lem:lead_ex} shows that any graph, including those commonly found in complex networks such as scale free \cite{albert2002statistical} and small world \cite{watts1998collective} networks, can be rendered leaderless by the appropriate social orientation. As such, Assumption~\ref{as:leaderless} is widely applicable. The behavior of the natural resource dynamic over a leaderless network will be further studied in Section~\ref{sec:sim}, after the stability of the system has been established.
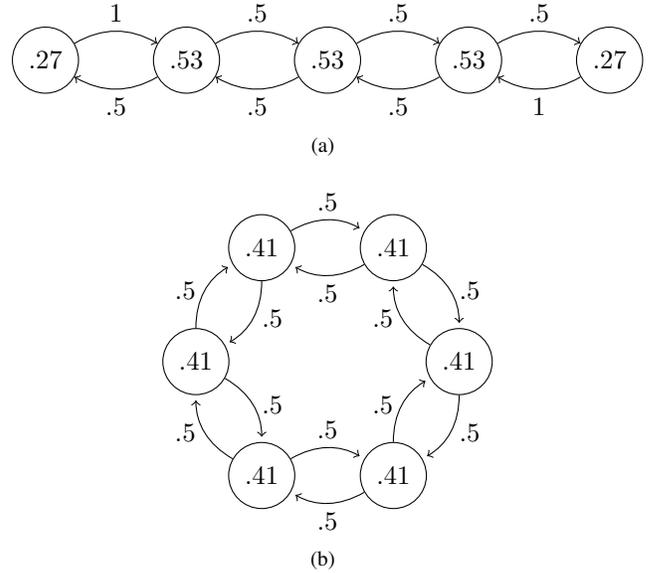
\begin{figure}[h]
\centering
	\subfloat[]{\label{subfig:line}	\begin{tikzpicture}
		[
		->,
		auto,
		node distance=0.5cm,
		every text node part/.style={align=center},
		scale=0.25, every node/.style={scale=1}
		]
		\tikzstyle{every state}=[fill=none,draw=black,text=black]
		\node[
		state
		] (n0) at(-15,0)
		{$.27$};
		\node[
		state
		] (n1) at(-7.5,0)
		{$.53$};
		\node[
		state
		] (n2) at(0,0)
		{$.53$};
		\node[
		state
		] (n3) at(7.5,0)
		{$.53$};
		\node[
		state
		] (n4) at(15,0)
		{$.27$};
		\path (n0) edge [->, bend left] node [above] {$1$}(n1)
		(n1) edge [->, bend left] node [below] {$.5$}(n0)
		(n1) edge [->, bend left] node [above] {$.5$}(n2)
		(n2) edge [->, bend left] node [below] {$.5$}(n1)
		(n2) edge [->, bend left] node [above] {$.5$}(n3)
		(n3) edge [->, bend left] node [below] {$.5$}(n2)
		(n3) edge [->, bend left] node [above] {$.5$}(n4)
		(n4) edge [->, bend left] node [below] {$1$}(n3);
		\end{tikzpicture}}\\
	\subfloat[]{\label{subfig:cycle}
		\begin{tikzpicture}
		[
		->,
		shorten >=2pt,
		auto,
		node distance=0.5cm,
		every text node part/.style={align=center},
		scale=0.25, every node/.style={scale=1}
		]
		\node[
		state
		] (n0) at(7,0)
		{$.41$};
		\node[
		state
		] (n1) at(3.5,6.0622)
		{$.41$};
		\node[
		state
		] (n2) at(-3.5,6.0622)
		{$.41$};
		\node[
		state
		] (n3) at(-7,0)
		{$.41$};
		\node[
		state
		] (n4) at(-3.5,-6.0622)
		{$.41$};
		\node[
		state
		] (n5) at(3.5,-6.0622)
		{$.41$};
		\path (n0) edge [->, bend left] node [left] {$.5$}(n1)
		(n1) edge [->, bend left] node [right] {$.5$}(n0)
		(n1) edge [->, bend left] node [below] {$.5$}(n2)
		(n2) edge [->, bend left] node [above] {$.5$}(n1)
		(n2) edge [->, bend left] node [right] {$.5$}(n3)
		(n3) edge [->, bend left] node [left] {$.5$}(n2)
		(n3) edge [->, bend left] node [right] {$.5$}(n4)
		(n4) edge [->, bend left] node [left] {$.5$}(n3)
		(n4) edge [->, bend left] node [above] {$.5$}(n5)
		(n5) edge [->, bend left] node [below] {$.5$}(n4)
		(n5) edge [->, bend left] node [left] {$.5$}(n0)
		(n0) edge [->, bend left] node [right] {$.5$}(n5);
		\end{tikzpicture}}
	\caption{$2$ Leaderless Networks: a line graph (\ref{subfig:line}) and a cycle graph (\ref{subfig:cycle}). Each edge is labeled with its edge weight $\omega_{ij}$ and each node is labeled with its social attribution $b_{i}v_{i}$.}\label{fig:led_ex}
\end{figure}

\subsection{Dynamics}
With these assumptions in place, the transformed network level dynamics will be derived. Computing the time derivative of $z$ gives
\begin{equation}
	\dot{z} = \frac{\dot{x}}{x} = 1 - x - \sum_{i=1}^{\n}y_i = 1 - e^z - u.
\end{equation}
Differentiating $u$ with respect to time and expanding gives
\begin{equation}
	\dot{u} = \sum_{i=1}^{\n} \dot{y}_i,
\end{equation}
\begin{align}
	\dot{u} &= \sum_{i=1}^{\n} \big(\b_i\upalpha_i x - \b_i\upalpha_i\uprho_i\big) - \sum_{i=1}^{\n}\sum_{j=1}^{\n}\b_i\upnu_i\w_{ij}(y_i - y_j) \\
	&= \sum_{i=1}^{\n}\big(\b_i\upalpha_i e^z - \b_i\upalpha_i\uprho_i\big) \\ &{}\qquad\qquad -\sum_{i=1}^{\n}\bigg[\sum_{j=1}^{\n}\big(\w_{ij}\b_i\upnu_i - \w_{ji}\b_j\upnu_j\big)\bigg]y_i \\
        &= \sum_{i=1}^{\n} \b_i \upalpha_i e^z - \sum_{i=1}^{\n} \b_i \upalpha_i - \sum_{i=1}^{\n}\b_i\upalpha_i(\uprho_i - 1),
\end{align}
\noindent where the sum containing $y_i$ has vanished due to Assumption~\ref{as:leaderless}. Continuing, 
\begin{align}
\begin{split}
	\dot{u} &= \bigg[\sum_{i=1}^{\n}\b_i\upalpha_i\bigg](e^z - 1) - \sum_{i=1}^{\n}\b_i\upalpha_i(\uprho_i - 1) \\
	&= \K_1(e^z - 1) - \K_2,
\end{split}
\end{align}
where
\begin{equation}
	\K_1 := \sum_{i=1}^{\n}\b_i\upalpha_i \textnormal{ and } \K_2 := \sum_{i=1}^{\n} \b_i\upalpha_i(\uprho_i - 1).
\end{equation}

From these definitions, $\K_1$ is manifestly positive because it is a sum of positive terms. Under Assumption~\ref{as:rhoi}, 
it also follows that
{\setlength{\mathindent}{0pt}
\begin{align*}
\begin{split}
	|\K_2| &= \left|\sum_{i=1}^{\n} \b_i\upalpha_i(\uprho_i - 1)\right| \leq \max_{i \in \{1,\dots,\n\}} |\uprho_i - 1| \sum_{i=1}^{\n} \b_i\upalpha_i \\
	&= \K_1\max_{i \in \{1, \ldots, n\}} |\uprho_i - 1| \leq \K_1, 
\end{split}
\end{align*}}
where the last inequality follows from Assumption~\ref{as:rhoi}.

The $(z, u)$ dynamics thus take the form
\begin{subequations} \label{eq:newsys}
\begin{align}
	\dot{z} &= 1 - e^z - u \\
	\dot{u} &= \K_1(e^z - 1) -\K_2.
\end{align}
\end{subequations}
Next, the equilibrium of the $(z, u)$ system is computed in order to translate
this system into one whose equilibrium is at the origin. 

\subsection{Equilibrium}
The following lemma provides the uniqueness and value of the $(z, u)$ system's equilibrium point. 

\begin{lemma}\label{lem:eq}
The $(z, u)$ system has a unique equilibrium point located at
\begin{align}
	\z_0 &= \log \left(\frac{\K_2}{\K_1} + 1\right) \\
	\u_0 &= -\frac{\K_2}{\K_1}.
\end{align}
\end{lemma}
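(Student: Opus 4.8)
The plan is to locate the equilibrium by setting $\dot{z} = \dot{u} = 0$ in Equation~\eqref{eq:newsys} and solving the resulting algebraic system directly, then to argue uniqueness from the strict monotonicity of the exponential. First I would use $\dot{u} = 0$, which reads $\K_1(e^z - 1) - \K_2 = 0$. Since $\K_1 > 0$ (it is a sum of strictly positive terms), I may divide through to obtain $e^z = \K_2/\K_1 + 1$, and taking logarithms yields the claimed value $\z_0 = \log\!\left(\K_2/\K_1 + 1\right)$. I would then substitute $e^{\z_0} = \K_2/\K_1 + 1$ into the first equilibrium equation $\dot{z} = 0$, i.e. $u = 1 - e^z$, to get $\u_0 = 1 - \left(\K_2/\K_1 + 1\right) = -\K_2/\K_1$. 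This recovers both stated coordinates.

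The one subtlety, and the step I expect to demand the most care, is verifying that $\z_0$ is well defined, i.e. that the argument of the logarithm is strictly positive. This is exactly where Assumption~\ref{as:rhoi} earns its place. The bound $|\K_2| \leq \K_1$ established just before the statement gives $\K_2/\K_1 \geq -1$; moreover, because Assumption~\ref{as:rhoi} forces $\uprho_i \in (0,2)$, we have $|\uprho_i - 1| < 1$ strictly for every $i$, and since the maximum is taken over the finite index set $\{1,\dots,\n\}$ it too is strictly below $1$. Hence $|\K_2| < \K_1$ strictly, so $\K_2/\K_1 + 1 > 0$ and $\log\!\left(\K_2/\K_1 + 1\right)$ is a finite real number. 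Without this bound the equilibrium could fail to exist in the $z$ coordinate; recall that $z = \log x$, so a non-positive argument would correspond to a non-positive resource level, which is physically meaningless.

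Finally I would establish uniqueness. Because $t \mapsto e^{t}$ is strictly increasing, the equation $e^z = \K_2/\K_1 + 1$ admits at most one real solution, so $\z_0$ is the unique value of $z$ consistent with $\dot{u} = 0$; the equation $\dot{z} = 0$ then pins down $u$ uniquely as $\u_0 = 1 - e^{\z_0}$. Thus the pair $(\z_0, \u_0)$ is the only equilibrium of Equation~\eqref{eq:newsys}, which completes the argument. Notably, no fixed-point or degree-theoretic machinery is required: the nonlinearity enters only through the invertible map $e^z$, and it is this invertibility that makes both existence and uniqueness follow from a direct computation rather than an abstract existence result.
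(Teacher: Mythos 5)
Your proof is correct and takes essentially the same route as the paper's: solve $\dot{u}=0$ for $z$, then substitute $e^{\z_0} = \K_2/\K_1 + 1$ into $\dot{z}=0$ to recover $\u_0 = -\K_2/\K_1$. The extra care you take—establishing the strict bound $|\K_2| < \K_1$ so the logarithm's argument is positive, and deducing uniqueness from the monotonicity of $e^z$—only makes explicit what the paper leaves implicit (it notes $|\K_2| < \K_1$ and the positivity of $\K_2/\K_1 + 1$ in the discussion following the lemma rather than inside the proof).
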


\begin{proof}
Setting $\dot{u} = 0$ we find
\begin{equation}
	\dot{u} = \K_1(e^z - 1) - \K_2 = 0,
\end{equation}
which immediately provides
\begin{equation}
	\z_0 = \log(\K_2/\K_1 + 1). 
\end{equation}
Setting $\dot{z} = 0$ gives
\begin{equation}
	\dot{z} = 1 - e^z - u = 0,
\end{equation}
where setting $z = \z_0$ results in
\begin{equation}
\dot{z} = 1 - \left(\frac{\K_2}{\K_1} + 1\right) - u = 0.
\end{equation}
Solving for $\u_0$ then provides
\begin{equation}
	\u_0 = -\frac{\K_2}{\K_1}. 
\end{equation}
\end{proof}

By Lemma~\ref{lem:eq},  the equilibrium value of the resource, $\R_0$, (in the coordinates of Eq.~\eqref{eq:ses}) is
\begin{equation}
	\R_0 = \left(\frac{\K_2}{\K_1} + 1\right)\Rmax.
\end{equation}
If $\K_2 > 0$, which corresponds to at least one agent having $\Ri > \Rmax$, then $\R_0$ is larger than $\Rmax$. This occurs as the agents with $\Ri > \Rmax$ will work to increase $\R$ beyond $\Rmax$. Alternatively, if $\K_2$ is negative then $\R_0$ is smaller than $\Rmax$, though $\R_0$ is always positive because $|\K_2| < \K_1$ and thus $\K_2/\K_1 + 1 > 0$ always. Similarly, if $\K_2 > 0$, then $\u_0 > 0$, corresponding to a net effort to increase the available quantity of resource, while $\K_2 < 0$ causes $\u_0 < 0$, which corresponds to active resource consumption at steady state. 

Having computed the equilibrium point of the system, we define a coordinate shift by
\begin{align}
	v = z - \z_0, \quad w = u - \u_0,
\end{align}
resulting in the dynamics
\begin{align*}
\begin{split}
\dot{v} &= \dot{z} = 1 - e^z - u = 1 - e^{v + \z_0} - (w + \u_0), \\
	&= - e^v e^{\z_0} - w + 1 + \frac{\K_2}{\K_1}= -e^v e^{\z_0} - w + e^{\z_0}, \\
	&= -e^{\z_0}(e^v - 1) - w,
\end{split}
\end{align*}
where we have used $e^{\z_0} = \K_2/\K_1 + 1$. 

For $w$, the dynamics are governed by
\begin{align*}
\dot{w} &= \dot{u} = \K_1(e^z - 1) - \K_2 \\
        &= \K_1e^{v+\z_0} - \K_1 - \K_2 \\
        &= \K_1e^v\left(1 + \frac{\K_2}{\K_1}\right) - \K_1 - \K_2 \\
        &= \K_1e^v + \K_2e^v - \K_1 - \K_2 \\
        &= (\K_1 + \K_2)(e^v - 1). 
\end{align*}

The final system dynamics to be analyzed are
\begin{align} \label{eq:mainsys}
\dot{v} &= -e^{\z_0}(e^v - 1) - w \\
\dot{w} &= (\K_1 + \K_2)(e^v - 1),
\end{align}
whose unique equilibrium point is the origin. 

\subsection{Global Stability}
\label{sec:stability}
The following theorem demonstrates asymptotic stability
of the system in Equation~\eqref{eq:mainsys} to the origin. 

\begin{theorem} \label{thm:main}
Under Assumptions~\ref{as:leaderless} and \ref{as:rhoi} the origin is globally asymptotically stable in Equation~\ref{eq:mainsys}.
\end{theorem}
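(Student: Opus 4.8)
The plan is to establish global asymptotic stability via a Lyapunov argument combined with LaSalle's invariance principle \cite{khalil2002nonlinear}, since the natural energy-like function will turn out to have a derivative that is only negative \emph{semi}definite. The key structural feature of Equation~\eqref{eq:mainsys} is that the same nonlinearity $e^v - 1$ drives both states. I would exploit this by taking as a candidate
\begin{equation*}
V(v, w) = (\K_1 + \K_2)(e^v - v - 1) + \tfrac{1}{2}w^2,
\end{equation*}
where the first summand is the antiderivative of the forcing nonlinearity $(\K_1 + \K_2)(e^v - 1)$ appearing in $\dot w$, and the second is the usual quadratic term in $w$. The coefficients are chosen precisely so that the indefinite cross terms in $\dot V$ cancel.

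First I would verify that $V$ is a legitimate Lyapunov candidate. The scalar map $v \mapsto e^v - v - 1$ is nonnegative and vanishes only at $v = 0$, and it tends to $+\infty$ as $v \to \pm\infty$ (the exponential dominates for $v \to +\infty$, the linear term for $v \to -\infty$). Provided $\K_1 + \K_2 > 0$, the function $V$ is therefore positive definite and radially unbounded. This positivity is exactly where Assumption~\ref{as:rhoi} enters: since $\uprho_i \in (0,2)$ gives $|\uprho_i - 1| < 1$ for every $i$, the bound derived above yields $|\K_2| < \K_1$, so $\K_1 + \K_2 > 0$ as required. Radial unboundedness guarantees that sublevel sets of $V$ are compact, so all trajectories are bounded, which licenses the invariance argument below.

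Next I would differentiate $V$ along Equation~\eqref{eq:mainsys}. Computing
\begin{align*}
\dot V &= (\K_1 + \K_2)(e^v - 1)\dot v + w\dot w \\
&= (\K_1 + \K_2)(e^v - 1)\big[-e^{\z_0}(e^v - 1) - w\big] \\
&\quad\ + w(\K_1 + \K_2)(e^v - 1),
\end{align*}
the two terms linear in $w$ cancel identically, leaving
\begin{equation*}
\dot V = -(\K_1 + \K_2)e^{\z_0}(e^v - 1)^2 \le 0,
\end{equation*}
which is negative semidefinite because both $\K_1 + \K_2 > 0$ and $e^{\z_0} > 0$. Since $\dot V$ vanishes exactly on the line $\{v = 0\}$, I would then invoke LaSalle: on $\{v = 0\}$ the first equation reduces to $\dot v = -w$, so any trajectory confined to this set must also satisfy $w = 0$, whence the largest invariant subset of $\{\dot V = 0\}$ is the single point $(v, w) = (0, 0)$. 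LaSalle's invariance principle then forces every trajectory to converge to the origin, and radial unboundedness of $V$ upgrades this to global asymptotic stability.

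The main obstacle is not the computation but the recognition of the correct Lyapunov function — specifically, pairing the coefficient $\K_1 + \K_2$ with the exponential antiderivative so that the indefinite cross terms cancel \emph{exactly}, rather than merely being dominated. A secondary point requiring care is confirming the sign condition $\K_1 + \K_2 > 0$, which simultaneously renders $V$ positive definite and makes the final invariance step conclusive; this is precisely where the mild bound of Assumption~\ref{as:rhoi} is used. Everything after the choice of $V$ is routine verification.
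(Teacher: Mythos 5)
Your proof is correct and follows essentially the same route as the paper: your Lyapunov function $(\K_1+\K_2)(e^v - v - 1) + \tfrac{1}{2}w^2$ is exactly the paper's candidate $e^v - v - 1 + \tfrac{(\K_1+\K_2)^{-1}}{2}w^2$ scaled by the positive constant $\K_1+\K_2$, yielding the same cancellation of cross terms, the same semidefinite derivative $-(\mathrm{const})\,e^{\z_0}(e^v-1)^2$, and the same LaSalle invariance step on the set $\{v=0\}$. Your explicit verification that $\K_1 + \K_2 > 0$ via Assumption~\ref{as:rhoi} is a welcome addition the paper leaves implicit in the theorem's proof.
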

\begin{proof}
Consider the Lyapunov function 
\begin{equation}
	V(v, w) = e^v - v - 1 + \frac{(\K_1 + \K_2)^{-1}}{2}w^2, 
\end{equation}
which is positive definite, satisfies $V(0, 0) = 0$, and is radially unbounded.
Differentiating $V$ with respect to time, 
\begin{align}
\begin{split}
	\dot{V} &= e^v\dot{v} - \dot{v} + (\K_1 + \K_2)^{-1}w\dot{w} \\
	&= \begin{aligned}[t] &e^v(-e^{\z_0}(e^v - 1) - w) + e^{\z_0}(e^v - 1) \\ &+ w + w(e^v - 1)\end{aligned} \\
	&= \begin{aligned}[t]&-e^{\z_0}e^v(e^v - 1) - e^v w + e^{\z_0}(e^v - 1) \\&+ w + w(e^v - 1)\end{aligned} \\
	&= -e^{\z_0}(e^v - 1)^2 + w(e^v - 1) - w(e^v - 1) \\
	&= -e^{\z_0}(e^v - 1)^2 \leq 0.
\end{split}
\end{align}
Here, LaSalle's invariance principle can be used to prove global asymptotic stability of $(0, 0)$ by showing
that the set $V_{0}=\{(v, w) \mid \dot{V}(v, w) = 0\}$ contains only the trivial trajectory $\big(v(t), w(t)\big) \equiv (0, 0)$ \cite{khalil2002nonlinear}.

To do so, observe that $\dot{V}(v, w) = 0$ for all trajectories of the form $(0, w)$. 
Plugging this into the system dynamics in Equation~\eqref{eq:mainsys} implies
\begin{align*}
	\dot{v} = -w, \quad \dot{w} = 0.
\end{align*}
Then the only invariant point in $V_{0}$ has $w = 0$ because $\dot{v} = 0$ must hold to ensure that the system remains in $V_{0}$.  
\end{proof}

\section{Simulations}
\label{sec:sim}
This section considers the behavior of leaderless network topologies in simulation, focusing specifically on the case of the star graph. The star graph, also known as a hub, is of central importance to the study of complex networks \cite{newman2010networks}. The star graph also has a node, the center of the star, that might be expected to be the leader of a social network. Despite this, there are many leaderless social networks that can evolve over the star graph. Three leaderless networks on the same star topology are shown in Figure \ref{fig:3star}.

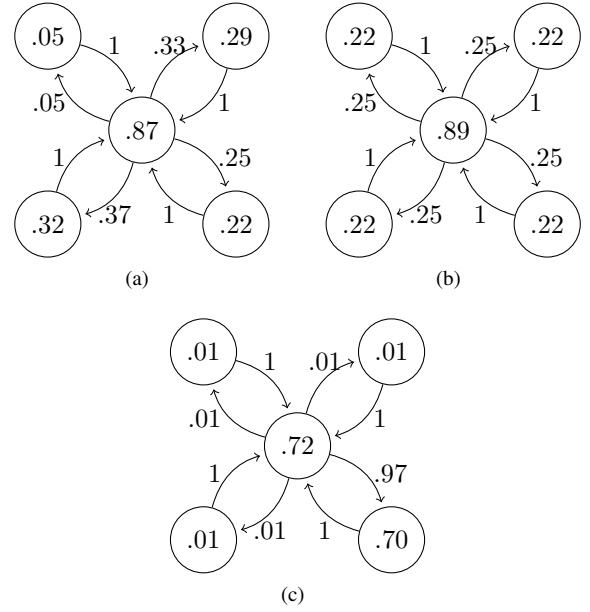
\begin{figure}[h]
\centering
	\subfloat[]{\label{subfig:1}	\begin{tikzpicture}
		[
		->,
		shorten >=2pt,
		auto,
		node distance=0.5cm,
		every text node part/.style={align=center},
		scale=0.25, every node/.style={scale=1}
		]
		\tikzstyle{every state}=[fill=none,draw=black,text=black]
		\node[
		state
		] (n0) at(-5,5)
		{$.05$};
		\node[
		state
		] (n1) at(-5,-5)
		{$.32$};
		\node[
		state
		] (n2) at(0,0)
		{$.87$};
		\node[
		state
		] (n3) at(5,5)
		{$.29$};
		\node[
		state
		] (n4) at(5,-5)
		{$.22$};

		\path (n2) edge [->, bend left] node [below,left] {$.05$}(n0)
		(n0) edge [->, bend left] node [above] {$1$}(n2)
		(n2) edge [->, bend left] node [below] {$.37$}(n1)
		(n1) edge [->, bend left] node [above,left] {$1$}(n2)
		(n2) edge [->, bend left] node [above] {$.33$}(n3)
		(n3) edge [->, bend left] node [below,right] {$1$}(n2)
		(n2) edge [->, bend left] node [above,right] {$.25$}(n4)
		(n4) edge [->, bend left] node [below] {$1$}(n2);
		\end{tikzpicture}}\hspace{.5cm}
	\subfloat[]{\label{subfig:2}
		\begin{tikzpicture}
		[
		->,
		shorten >=2pt,
		auto,
		node distance=0.5cm,
		every text node part/.style={align=center},
		scale=0.25, every node/.style={scale=1}
		]
			\node[
		state
		] (n0) at(-5,5)
		{$.22$};
		\node[
		state
		] (n1) at(-5,-5)
		{$.22$};
		\node[
		state
		] (n2) at(0,0)
		{$.89$};
		\node[
		state
		] (n3) at(5,5)
		{$.22$};
		\node[
		state
		] (n4) at(5,-5)
		{$.22$};
		\path (n2) edge [->, bend left] node [below,left] {$.25$}(n0)
		(n0) edge [->, bend left] node [above] {$1$}(n2)
		(n2) edge [->, bend left] node [below] {$.25$}(n1)
		(n1) edge [->, bend left] node [above,left] {$1$}(n2)
		(n2) edge [->, bend left] node [above] {$.25$}(n3)
		(n3) edge [->, bend left] node [below,right] {$1$}(n2)
		(n2) edge [->, bend left] node [above,right] {$.25$}(n4)
		(n4) edge [->, bend left] node [below] {$1$}(n2);
		\end{tikzpicture}}\\
		\subfloat[]{\label{subfig:3}
		\begin{tikzpicture}
		[
		->,
		shorten >=2pt,
		auto,
		node distance=0.5cm,
		every text node part/.style={align=center},
		scale=0.25, every node/.style={scale=1}
		]
		\tikzstyle{every state}=[fill=none,draw=black,text=black]
				\node[
		state
		] (n0) at(-5,5)
		{$.01$};
		\node[
		state
		] (n1) at(-5,-5)
		{$.01$};
		\node[
		state
		] (n2) at(0,0)
		{$.72$};
		\node[
		state
		] (n3) at(5,5)
		{$.01$};
		\node[
		state
		] (n4) at(5,-5)
		{$.70$};
	
		\path (n2) edge [->, bend left] node [below,left] {$.01$}(n0)
		(n0) edge [->, bend left] node [above] {$1$}(n2)
		(n2) edge [->, bend left] node [below] {$.01$}(n1)
		(n1) edge [->, bend left] node [above,left] {$1$}(n2)
		(n2) edge [->, bend left] node [above] {$.01$}(n3)
		(n3) edge [->, bend left] node [below,right] {$1$}(n2)
		(n2) edge [->, bend left] node [above,right] {$.97$}(n4)
		(n4) edge [->, bend left] node [below] {$1$}(n2);
		\end{tikzpicture}}
	\caption{$3$ Leader-Less Star Graphs for (\ref{subfig:1}) random weights, (\ref{subfig:2}) uniform weights, (\ref{subfig:3}) skewed weights. Each edge is labeled with its weight and each node is labeled with its social orientation.}\label{fig:3star}
\end{figure}

The resource was assumed to have a carrying capacity $\Rmax=1$, a growth rate $\r=1$, and a random initial condition that was fixed across simulations. The network was run with an ecological attribution $\a$ and a set of thresholds $\R$ where $$\a=\begin{bmatrix}
0.4340\\
0.2046\\
0.1891\\
0.6935\\
0.2108\\
\end{bmatrix}, ~~~~\R=\begin{bmatrix}
0.2262\\
0.4788\\
0.4582\\
1.1745\\
0.8483\\
\end{bmatrix}.$$ The time history of the resource level was identical for all $3$ systems and is shown in Figure \ref{fig:sys}. The individual usages for each of the $3$ systems are shown in Figure \ref{fig:3run}. 
\begin{figure}
\centering
\includegraphics[width=.75\columnwidth]{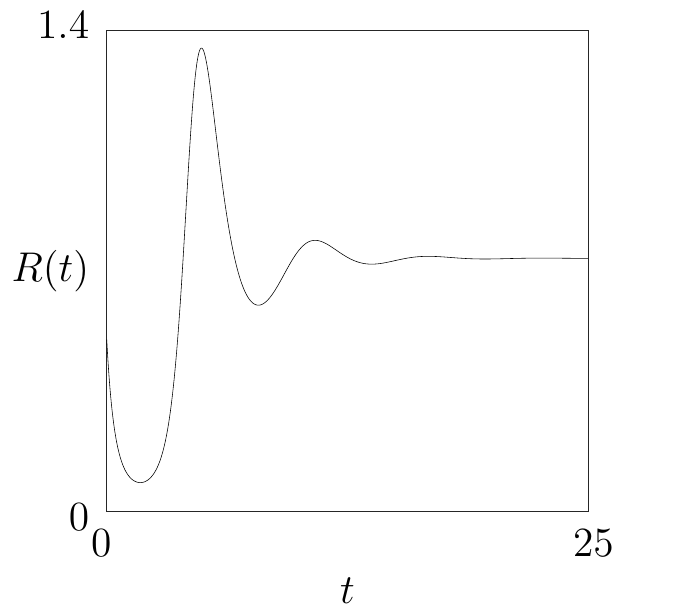}
\caption{Level of the Natural Resource over time of the three $5$ node leaderless star graphs shown in Figure \ref{fig:3star}.}\label{fig:sys}
\end{figure}

\begin{figure}[h!] \centering
\subfloat[]{\label{subfig:2_1}	\includegraphics[width=.5\columnwidth]{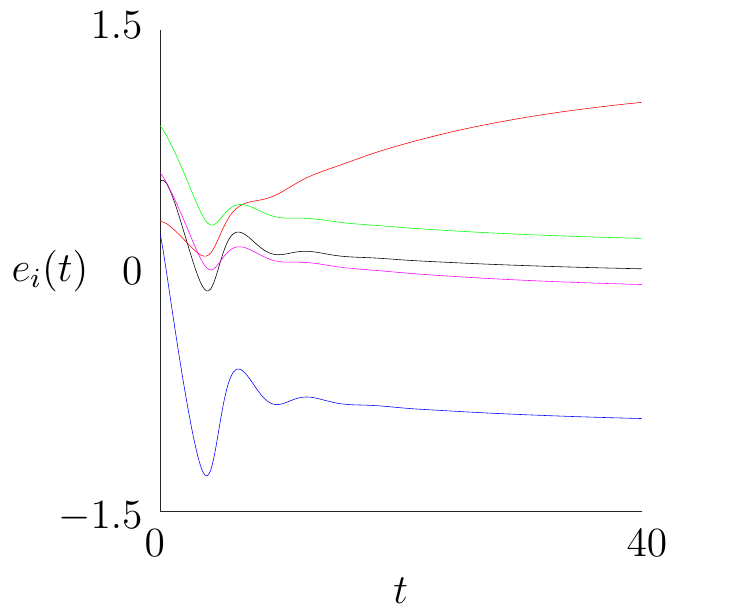}}
\subfloat[]{\label{subfig:2_2}	\includegraphics[width=.5\columnwidth]{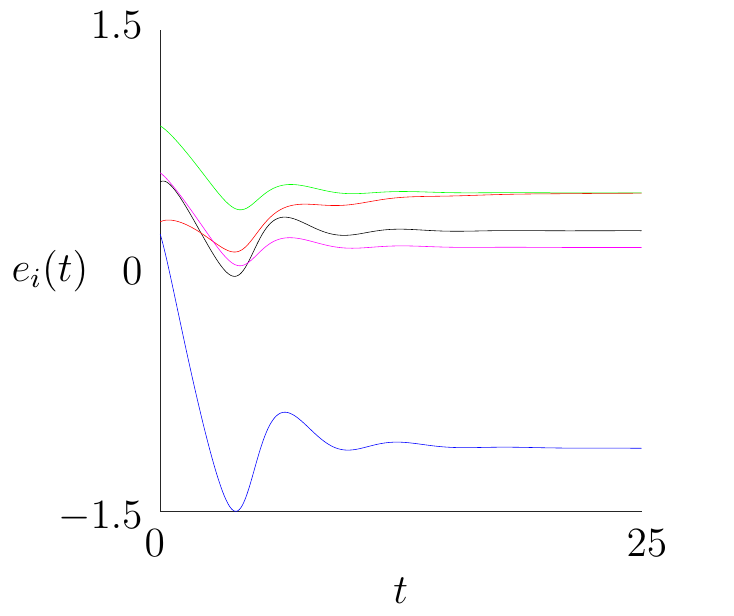}}\\
\subfloat[]{\label{subfig:2_3}	\includegraphics[width=.5\columnwidth]{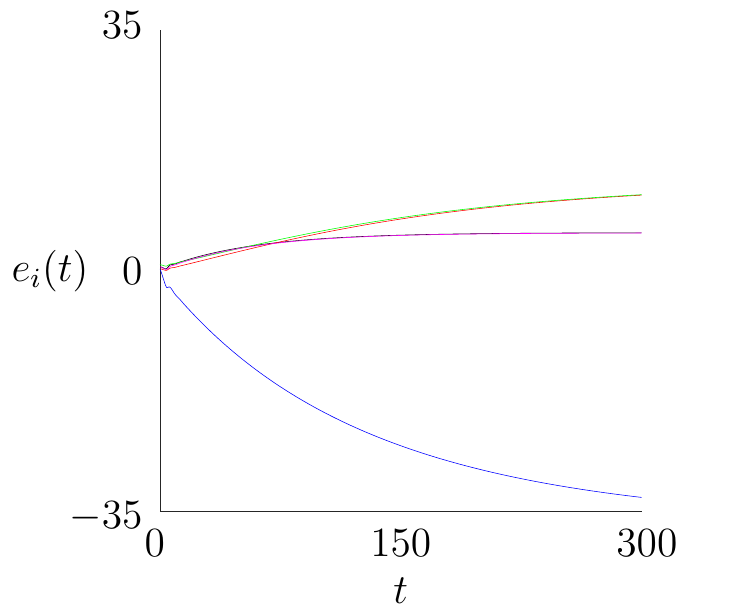}} \hspace{.5cm}
		\subfloat[]{\label{subfig:2_4}
		\begin{tikzpicture}
		[
		->,
		shorten >=2pt,
		auto,
		node distance=0.5cm,
		every text node part/.style={align=center},
		scale=0.25, every node/.style={scale=1}
		]
		\tikzstyle{every state}=[fill=none,draw=black,text=black]
				\node[
		state,draw=red
		] (n0) at(-5,5)
		{$2$};
		\node[
		state,draw=blue
		] (n1) at(-5,-5)
		{$4$};
		\node[
		state
		] (n2) at(0,0)
		{$1$};
		\node[
		state,draw=green
		] (n3) at(5,5)
		{$3$};
		\node[
		state,draw=magenta
		] (n4) at(5,-5)
		{$5$};
		\path (n2) edge [->, bend left] (n0)
		(n0) edge [->, bend left] (n2)
		(n2) edge [->, bend left] (n1)
		(n1) edge [->, bend left] (n2)
		(n2) edge [->, bend left] (n3)
		(n3) edge [->, bend left] (n2)
		(n2) edge [->, bend left] (n4)
		(n4) edge [->, bend left] (n2);
		\end{tikzpicture}}
\caption{Evolution of Individual Resource Consumption for $3$ Leader-Less Star Graphs: (\ref{subfig:2_1}) the random weighted graph shown in (\ref{subfig:1}). (\ref{subfig:2_2}) the uniform weighted graph shown in (\ref{subfig:2}). (\ref{subfig:2_3}) the skew weighted graph shown in (\ref{subfig:3}). (\ref{subfig:2_4}) maps the position of the nodes to trajectories}\label{fig:3run}
\end{figure}

\section{Discussion}\label{sec:disc}

For each leaderless network considered here the aggregate resource consumption is identical, however the individual usage changes dramatically for the various weighting schemes. Further, given that the form of the individual consumption effort given in Eq. \eqref{eq:eff} is similar to the consensus dynamic\cite{abelson1964mathematical,mesbahi2010graph}, it is reasonable to expect that individual consumption converges to a single uniform steady-state usage as happens under consensus on a connected graph. This does not occur in leaderless consumption networks, showing that this model addresses a need for social models that do not reach consensus \cite{friedkin2015problem}. 

To see why this behavior occurs, first note that the model allows negative resource usage and that stability of the equilibrium point requires that an agent (here agent $4$ as shown in Figure \ref{fig:3run}) contributes resource to ensure balance with the usage of the other agents. As the network is leaderless, the equilibrium behavior of the system depends on the ecological component. Agent $4$ has a threshold, $R_{4}=1.17$, which is significantly higher than the thresholds of its neighbors. This higher threshold drives the agent to contribute resource to balance out the usage of the agents that have lower thresholds and which therefore consume the resource.

While this system is stable, as shown by Theorem \ref{thm:main} and displayed in Figure \ref{fig:sys}, this system level behavior would be worrying as the progress of a natural resource. Imagine, for example, the panic of a populace if the level of the local water reservoir were to change as indicated in Figure \ref{fig:sys}: The reservoir shifts quickly from being almost empty to overflowing and then starts heading back down towards empty before reaching equilibrium. 

The issue raised in the preceding paragraph points to the fact that stability, while vital for understanding the behavior of a system, is not the only property of a natural resource system which must be understood. There are other questions, those related to sustainability, which must be addressed about these models before they are used to inform decision making in resource governance problems. For example, can humans use this resource in the short term without risking the depletion of the resource in the long term? Future work is required to bridge this gap between stability tools and the characterization of an ecological system as sustainable. 

\section{Conclusion}\label{sec:conc}
In this paper, a model of natural resource consumption was considered. Stability was shown under the assumption that the network is leaderless and that the individual thresholds are in $(0,2\Rmax)$. The stability of the system was verified in simulation and it was shown that while the system is stable it is not necessarily sustainable. 
\bibliographystyle{unsrt}
\bibliography{waterresourcebib}

\addtolength{\textheight}{-12cm}   




\end{document}